\definecolor{lightblue}{rgb}{0.3,0.7,0.95}
\definecolor{ingo}{RGB}{139,0,0}
\newcommand{\B}{{\bf B}}
\newtheorem{theorem}{Theorem}
\newtheorem{corollary}[theorem]{Corollary}
\newtheorem{lemma}[theorem]{Lemma}
\newtheorem*{observation*}{Observation}
\theoremstyle{definition}
\newtheorem{definition}{Definition}
\DeclareMathOperator{\Tr}{Tr}
\DeclareMathOperator{\vecmap}{vec}
\DeclareMathOperator{\Id}{Id}
\DeclareMathOperator*{\argmin}{arg\,min}
\DeclareMathOperator{\supp}{supp}
\newcommand{\CC}{\mathbb{C}}
\newcommand{\RR}{\mathbb{R}}
\newcommand{\KK}{\mathbb{K}}% for \RR or \CC
\newcommand{\NN}{\mathbb{N}}
\renewcommand{\vec}[1]{\mathbf{#1}}
\newcommand{\norm}[1]{\left\Vert #1 \right\Vert} %norm with variable height
\newcommand{\ad}{\ensuremath^\dagger}
\renewcommand{\A}{\vec A} %measurement map
\newcommand{\x}{\vec x} %signal
\newcommand{\y}{\vec y} %measurements
\newcommand{\z}{\vec z} % standard variable for a vector in different contexts
\newcommand{\ev}{\vec e} %error vector
\newcommand{\proj}{\ensuremath\kern -0.12em\rfloor\kern -0.06em} %Projection to subspace  
\newlang{\bhtp}{HiHTP} %Block-Level-HTP
\newlang{\HiHTP}{HiHTP} % Alias for Block-Level-HTP
\newlang{\htp}{HTP} %Normal HTP
\newlang{\HTP}{HTP} %Alias for Normal HTP
\newlang{\GOMP}{GOMP}
\newlang{\HiLasso}{HiLasso}
\begin{document}
\title{Hierarchical restricted isometry property for Kronecker product measurements} 

% %%% Single author, or several authors with same affiliation:
% \author{%
%   \IEEEauthorblockN{Stefan M.~Moser}
%   \IEEEauthorblockA{ETH Zürich\\
%                     ISI (D-ITET)\\
%                     CH-8092 Zürich, Switzerland\\
%                     Email: moser@isi.ee.ethz.ch}
% }

%%% Several authors with up to three affiliations:
% \author{%
%   \IEEEauthorblockN{Ingo Roth}
%   \IEEEauthorblockA{Freie Universit\"{a}t Berlin\\
%                     Email: i.roth@fu-berlin.de}
%   \and
%   \IEEEauthorblockN{Axel Flinth}
%   \IEEEauthorblockA{Technische Universit\"{a}t Berlin\\
%                     Hogwarts Castle\\ 
%                     1714 Hogsmeade, Scotland\\
%                     Email: \{dumbledore, potter\}@hogwarts.edu}
% }

%%% Many authors with many affiliations:
 \author{%
   \IEEEauthorblockN{Ingo Roth\IEEEauthorrefmark{1},
                     Axel Flinth\IEEEauthorrefmark{2},
                     Richard Kueng\IEEEauthorrefmark{3},
                     Jens Eisert\IEEEauthorrefmark{1}, 
                     and Gerhard Wunder\IEEEauthorrefmark{1}}
   \IEEEauthorblockA{\IEEEauthorrefmark{1}%
                     Freie Universit\"{a}t Berlin, 
                     \{i.roth, jense, g.wunder\}@fu-berlin.de}
   \IEEEauthorblockA{\IEEEauthorrefmark{2}%
                     Technische Universit\"{a}t Berlin, 
                     flinth@math.tu-berlin.de}
   \IEEEauthorblockA{\IEEEauthorrefmark{3}%
                     California Institute of Technology, Pasadena, 
                     rkueng@caltech.edu}
 }

\maketitle

%%%%%%
%% Abstract: 
%% If your paper is eligible for the student paper award, please add
%% the comment "THIS PAPER IS ELIGIBLE FOR THE STUDENT PAPER
%% AWARD." as a first line in the abstract. 
%% For the final version of the accepted paper, please do not forget
%% to remove this comment!
%%
\begin{abstract}
%  THIS PAPER IS ELIGIBLE FOR THE STUDENT PAPER AWARD.  
%
Hierarchically sparse signals and Kronecker product structured measurements arise naturally in a variety of applications. The simplest example of a hierarchical sparsity structure is two-level $(s,\sigma)$-hierarchical sparsity which features $s$-block-sparse signals with $\sigma$-sparse blocks. For a large class of algorithms recovery guarantees can be derived based on the restricted isometry property (RIP) of the measurement matrix and model-based variants thereof. 
We show that given two matrices $\vec A$ and $\vec B$ having the standard $s$-sparse and $\sigma$-sparse RIP their Kronecker product $\vec A \otimes \vec B$ 
has two-level $(s,\sigma)$-hierarchically sparse RIP (HiRIP).
This result can be recursively generalized to signals with multiple hierarchical sparsity levels and measurements with multiple Kronecker product factors. As a corollary we establish the efficient reconstruction of hierarchical sparse signals from Kronecker product measurements using the HiHTP algorithm. 
We argue that Kronecker product measurement matrices allow to design large practical compressed sensing systems that are deterministically certified to reliably recover 
signals in a stable fashion. We elaborate on their motivation from 
the perspective of applications.
\end{abstract}

%% The paper must be self-contained. However, if you are referring to
%% a full version for checking certain proofs, please provide the
%% publically accessible location below.  If the paper is completely
%% self-contained, you can remove the following line from your
%% submission.
%\textit{A full version of this paper is accessible at: \url{http://www.isit2018.org/}}

%%% =======================================
%%% ================ intro ==================
%%% =======================================

\section{Introduction}

The field of compressed sensing studies the solution of the underdetermined inverse problem of reconstructing a suitably structured signal $\x \in \KK^d$ from linear noisy samples $\y = \A\x + \vec{e} \in \KK^{m}$, where $\A \in \KK^{m\times d}$, $m < d$, is a measurement matrix and $\vec{e}$ accounts for additive noise.  The most prominent structure assumption on $\x$ is thereby sparsity. By $\KK$ we denote a field that is either that of 
real numbers $\RR$ or of complex numbers $\CC$. 

The recovery of $\x$ from $\y$ and $\A$ is guaranteed with high probability for a variety of algorithms when the measurement matrix $\A$ is drawn from a suitable random ensemble. A working-horse in proving such recovery guarantees is that a random measurement matrix $\A$ often fulfills the so-called
\emph{restricted isometry property} (RIP) with high probability. This means that there exist $\delta \in [0,1)$ such that 
\begin{equation}
  (1-δ) \norm{\x}^2 \leq \norm{\A\x}^2 \leq (1+δ) \norm{\x}^2
\end{equation}
for all $s$-sparse $\x \in \KK^d$. Here $\norm{\x}^2 = \sum_{i=1}^d |x_i|^2$ denotes the $\ell_2$-norm. Typical examples of such measurement ensemble fulfilling a RIP ensuring reconstruction from $m \geq m_0$ samples with $m_0 \in \tilde{\mathcal{O}}(s)$ are sub-Gaussian matrices or subsampled Fourier matrices. In practice, however, 
 matrices are most often `less random'. 
 In fact, in many applications it is highly desirable to make use of as little randomness as possible.
This work will focus on measurement matrices that can be written as the Kronecker product of smaller matrices. 
 In this sense, this work contributes to the broader scheme of partially derandomising 
recovery schemes. 
\subsection{Kronecker product measurements}

Measurement matrices that are the Kronecker product of a number of smaller matrices naturally appear in 
various practical applications. As an illustrative example let us consider the following simple multi-user communication model: A potentially very large number $N$ of users simultaneously send messages $\vec x_i$ of length $n$ to a central base station. They thereby encode their messages with a common compressed sensing matrix $\vec A$. At the base station $m$ different superpositions $y_j = \sum_{i} b_{j,i} \vec A \vec x_i$ of the individual encoded messages $\vec A \vec x_i$ are measured. One can, for instance, think of a massive MIMO system, where the different weights $b_{j,i}$ arise from the fact that the encoded messages $\vec A\vec x_i$ scatter along different paths to arrive at the base station. Hence, at the base station we want to recover the entire signal $\vec x=[\vec x_1^T, \ldots, {\vec x_N}^T]^T$ from the linear measurements of the form $\vec y = (
\vec B \otimes \vec A) \vec x$ where $\vec B$ is the $m \times N$ matrix with entries $b_{j,i}$. We conclude that Kronecker product measurements are typically encountered when the superposition of multiple parties that share a common sensing/coding matrix are observed.

Another important class of examples is constituted by unit rank measurements on matrices as they can be cast as Kronecker product measurements. Consider measurements on $\vec X \in \KK^{N, n}$ of the form $Y_{i,j} = \Tr(\vec a_i \vec b_j^T \vec X)$, where $\vec a_i$ and $\vec b_j$ denote the columns of a matrix $\vec A \in \KK^{M\times N}$ and $\vec B \in \KK^{m\times n}$, respectively. Then using column-wise vectorisation it holds that $\vecmap(\vec Y) = \vec A \otimes \vec B^T \vecmap(\vec X)$. 
Such unit rank measurements often arise in bilinear compressed sensing problems that are lifted 
\cite{PhaseLift} to linear matrix problems. The results of this work are for example applied in angle-delay pair estimation in massive MIMO in Ref.~\cite{mmimoHiHTP} along those lines.

%Another important class of examples arises in the context lifting phase retrieval and bilinear compressed sensing problems to linear matrix problems. The phase retrieval problem \cite{PhaseLift} is the task of recovering a vector $\vec x \in \CC^n$ from the magnitudes $\abs{\vec A \vec x}^2$ of linear measurements. Lifting the problem refers to interpreting the non-linear measurements $\abs{\vec A\vec x}^2 = \sprod{\vec A \otimes \vec A, \vec x \otimes \vec x}$ as linear ones of the tensor $\vec x \otimes \vec x$. \IR{This is not correct!}
 %Notice that the measurement matrix hereby is a Kronecker product. In the same way, bilinear compressed sensing problems can be typically lifted to compressed sensing problems with Kronecker product measurements. 

%Concrete examples include the blind deconvolution problem, see e.g. \cite{ahmed2014blind,ling2015sparseDeconv,}, or angle-delay pair estimation in massive MIMO \cite{mmimoHiHTP}.

%Assuming that the users send $\sigma$-sparse messages $x_i$, and that only $s$ users are active in a certain time-slot, $X$ becomes $(s,\sigma)$-sparse.

%\cite{JokarMehrmann} \AF{ [Cite/Advertise upcoming MIMO-Paper?] ''Partial randomisation''... } 

From the computational perspective, Kronecker products have a number of highly desirable 
properties. For instance, they can to some extent be applied in parallel computations or stored more efficiently. A Kronecker product $\A \otimes \B \in \KK^{M\times N} \otimes \KK^{m\times n}$ is described by $MN + mn$ parameters, whereas a general matrix $\vec C \in \KK^{Mm\times Nn}$ needs $MNmn$ parameters. At the same time, this significantly
reduces the amount of randomness that is required to generate such matrices. This is, in fact, an obstacle for proving that such matrices obey the standard RIP property.

The relation between the RIP-constants of a group of matrices $\vec A_1, \dots, \vec A_L$ and the corresponding constant for the Kronecker product $\vec A_1 \otimes \dots \otimes \vec A_L$ have been investigated in Refs.\  \cite{JokarMehrmann,KroneckerCS}. Therein, the authors use a slightly different convention for the RIP constants, namely
\begin{equation}
  (1-\overline{\delta})\norm{\x} \leq \norm{\A\x} \leq (1+\overline{\delta})\norm{\x}. \label{eq:RIPalt}
\end{equation}
In short, a Kronecker product has the $k$-RIP if and only if each of its blocks has the $k$-RIP. More concretely, %, we have the following bounds
\begin{align*}
  \max_{1 \leq l \leq L} \overline{\delta}_k(\A_\ell) \leq \delta_k( \A_1 \otimes \dots \otimes \A_L ) \leq \prod_{\ell=1}^L (1+ \overline{\delta}_k(\A_\ell)) -1.
\end{align*}
%\AF{Unfortunately, this $\delta$ is without the squares (we use the $\delta$-with the squares later in the paper). I propose that we use the notation $\overline{\delta}$ for the $\delta$ without squares, just to keep them apart..} The lower bound was proved in \cite{JokarMehrmann}, whereas the upper bound was proved in \cite{KroneckerCS}. 
For us, in particular the lower bound is interesting. It tells us that
\emph{if we intend to build a matrix $\A_1 \otimes \dots \otimes \A_L$ with the $s$-RIP, we need each $\A_\ell$ to 
exhibit the $s$-RIP!} We can in particular not ensure to be able to reconstruct arbitrary signals of higher sparsity than $s$ if  one of the matrices $\A_\ell$ fails to reconstruct $s$-sparse signals.

\subsection{Hierarchically sparse vectors}
Motivated by a variety of applications, more restricted sparsity structures have intensively been
studied over the last decade.
Classic examples of structured sparse signals are signals that have only a small number of non-vanishing but possibly dense blocks, block-sparsity, \cite{EldarMishali2009,StojnicParvareshHassibi2009} or signals that feature sparse blocks (see, e.g., Ref.~\cite{AdcockEtAl2013}). The combination and generalisation of these structure leads to the concept of hierarchically sparse signals. 
The simplest example are \emph{two-level $(s,\sigma)$-hierarchically sparse} signals (see, e.g., Refs.~\cite{
SprechmannEtAl2010,FriedmanEtAl2010,HiRIP}).

%Often however, the signals we wish to recover are not only sparse. Instead, their sparsity patterns are often structured in one way or another. An important case is the one of \emph{hierarchical sparsity}:
More precisely, let $\x \in \KK^{Nn}$. We can partition $\x$ into $N$ blocks $\x_i$, each of size $n$. 
\begin{definition}[Hierarchical sparsity]
  A  vector $\x \in\mathbb{K}^{nN}$ is \emph{$(s,\sigma)$-hierarchically sparse} if at most $s$ blocks have non-vanishing entries and each of these blocks is $\sigma$-sparse.
\end{definition}
For convenience, we will call a hierarchically $(s,\sigma)$-sparse vector
simply \emph{$(s,\sigma)$-sparse} in this work. 
In the applications discussed above, hierarchically sparse signals are a reasonable restriction. In our simple communication model for example, an $(s,\sigma)$-sparse signal $\vec x$ arises if we demand that at a given time only a maximum of $s$ users are active and the messages $\vec x_i$ itself are each $\sigma$-sparse.
The exploitation of such finer structure assumptions has been identified as crucial in the development of future scalable mobile communication systems \cite{WunderEtAl2014,WunderEtAl2015} and they have been studied in the task of channel estimation and user activity detection, e.g., in Ref.~\cite{WunderEtAl2017}.

Similarly, in bilinear compressed sensing problems where both arguments are sparse, 
the resulting vectorisation of the lifted matrix is hierarchically sparse \cite{mmimoHiHTP,WunderEtAl2018}. Note that in lifted problems the signals will also have a low-rank structure and thus be more structured than being
merely hierarchically sparse. For this reason, hierarchically sparse recovery methods are not expected to achieve an information theoretically optimal sampling complexity in these settings. But they are still of interest because of their low computational demands.  These are important examples; it goes without saying that hierarchically
sparse signals are ubiquitous in signal processing, in physics and in the life sciences.
% for these applications. 

By adopting the notion of model-based sparse recovery \cite{ModelbasedCS}, three of the five authors of this paper designed an iterative thresholding algorithm, HiHTP, for recovering $(s,\sigma)$-sparse and more general hierarchically sparse vectors, see
Ref.\  \cite{HiRIP}. The algorithm follows the same strategy as the original hard-thresholding pursuit (HTP) algorithm of \cite{Foucart:2011}.  In every iteration it estimates the support using a thresholding operation on a gradient step and subsequently solves the least-squares fitting problem restricted to the estimated support.  The main modification for the recovery of $(s,\sigma)$-sparse vectors is to employ the projection onto vectors with $(s,\sigma)$-sparse support 
\begin{equation}
  L_{s,\sigma}(\x) \coloneqq \supp \argmin_{\text{$(s,\sigma)$-sparse $\z$}} \norm{\x -\z}. 
\end{equation}
As argued in Ref.\ \cite{HiRIP} this projection can be efficiently calculated. Algorithm~\ref{alg:HiHTP} shows the resulting HiHTP algorithm.
\begin{algorithm}[tb]      
  \caption{(\HiHTP)} 
  \label{alg:HiHTP}
  \begin{algorithmic} [1]
    \REQUIRE measurement matrix $\A$, measurement vector $\y$, block column sparsity $(s,\sigma)$
    \STATE $\x^0 = 0$ 
    \REPEAT
      \STATE $\Omega^{k+1} = L_{s,\sigma} (\x^k + \A^\ast (\y - \A \x^k))$\label{alg:HiHTP:TH}
      \STATE $\x^{k+1} = \argmin_{\vec z \in \CC^{Nn}} \{ \|\y - \A\vec z\|,\  \supp(\vec z) \subset \Omega^{k+1} \}$ \label{alg:HiHTP:LS}
    \UNTIL stopping criterion is met at $\tilde{k} = k$
    \ENSURE $(s,\sigma)$-sparse vector $\x^{\tilde{k}}$
  \end{algorithmic}
\end{algorithm}
 The algorithm was proven to converge to the correct signal under an \emph{HiRIP}-assumption on the measurement matrix. A matrix is thereby said to have the HiRIP property if an inequality like \eqref{eq:HiRIP} is satisfied for all $(s, \sigma)$-sparse $\x$. Since the set of $(s,\sigma)$-sparse vectors is contained in the set of $s\cdot \sigma$-sparse vectors, the HiRIP is a weaker condition compared to standard RIP.

This work is dedicated to deriving
%In this publication, we want to derive 
statements about the HiRIP-properties of Kronecker products $\A \otimes \B$. We will prove that the $(s,\sigma)$-HiRIP constant of $\A \otimes \B$ is bounded by the $s$-RIP constant of $A$ and $\sigma$-RIP constant of $B$ as follows:
\begin{align*}
\delta_{(s,\sigma)}^{\A \otimes \B} \leq \delta_{s}^{\A} + \delta_{\sigma}^{\B} + \delta_s^{\A} \delta_\sigma^{\B}.
\end{align*}
Hence, the Kronecker product of matrices with good RIP-constants has a non-trivial HiRIP constant.
%Hence, Kronecker structures having good RIP-constants can be combined to a matrix having a HiRIP constant. 
This is in sharp contrast to the properties of the RIP discussed above. This discrepancy indicates that one can derive much stronger recovery results when dealing with hierarchical sparsity patterns, rather than unstructured ones.

We will argue that a similar statement holds for multilevel hierarchical structures and Kronecker products of the form $\vec A_1 \otimes \dots \otimes \vec A_L$. So we find that that tensor products 
%of matrices 
in general inherit multilevel HiRIP from the RIP of the constituents. 

From a more information theoretic perspective, the result opens up a new possibility to actually certify HiRIP for a given matrix. In principle, given a matrix $\vec A$, a sparsity level $s$ and a constant $\delta > 0$, it is an NP-hard problem to decide whether the RIP constant $\delta_s$ of $\vec A$ is smaller than $\delta$ \cite{TillmannPfetsch2014, BandeiraEtAl2013}. %The same holds in general for the cerfication of HiRIP since for $(1,s)$-sparse signals and $\vec B =  \Id \otimes \vec A$ HiRIP certification encodes standard RIP certification.  
However, our result indicates that certifying that a HiRIP constant of the matrix $\vec A$ is smaller than
% a given constant 
$\delta$ can be done by checking that all $\vec A_i$ have (sufficiently) smaller RIP-constants. If the dimension of the matrices $\vec A_i$ are small enough it is even practical to certify their RIP by brute-force calculations of a spectral norm for all possible sparse supports. In fact, we find that in certain parameter regimes the complexity of certification of HiRIP of a matrix $\vec A=\vec A_1^{\otimes l}$ with the brute-force algorithm scales polynomial in the size of the large matrix $\vec A$.

The rest of the paper is organized as follows.  In Section \ref{sec:main}, we present the technical statement of our main results and some fundamental consequences of them. In Section \ref{sec:check}, we discuss on how our main result can be used to design matrices that are known to have HiRIP. %In Section \ref{sec:numerics} we present numerical simulations that compare the performance of HTP and HiHTP for the recovery of hierarchically sparse signals from Kronecker product measurements.

\section{Main Results} \label{sec:main}
In the following, let $[d]$ be the subset $\left\{1, \ldots, d\right\} \subset \NN$ of integers smaller or equal than $d \in \NN$. Furthemore, for $\x \in \KK^{d}$ and $\Omega \in [d]$  we define   the vector $\x|_\Omega$ that coincides with $\x$ on the indices in $\Omega$ and vanishes otherwise. 
Let us begin by formally defining the HiRIP for $(s,\sigma)$-hierarchical sparsity.

\begin{definition}[HiRIP]
% \begin{enumerate} 
% \item (Hierarchical sparsity) A  vector $\x \in\mathbb{C}^{nN}$
%consisting of $N$ blocks of size $n$ is \emph{$(s,\sigma)$-hierarchically sparse} if atmost $s$ blocks have non-vanishing entries and each of these blocks is $\sigma$-sparse.
  Given a matrix $\A \in \KK^{m\times nN}$, we denote by $\delta_{s,\sigma}$ the smallest $\delta \geq 0$ such that 
  \begin{equation}
    (1-\delta)\|\x\|^2 \leq \|\A\x\|^2 \leq (1+\delta) \|\x\|^2 \label{eq:HiRIP}
  \end{equation}
  for all $(s,\sigma)$-hierarchically sparse vectors $\x \in \KK^{nN}$.
%\end{enumerate}
\end{definition}

%Notice that while hierarchical sparsity model perfectly harmonizes with Kronecker structures: 
%\IR{The following is not true:}
 %\ir{A vector $x \in \KK^{Nn} \simeq \KK^N \otimes \KK^n$ is  $(s,\sigma)$-sparse if and only if it can be written in the form $x_1 \otimes x_2$, where $x_1 \in \KK^N$ is $s$-sparse and $x_2 \in \KK^n$ is $\sigma$-sparse. This already suggests that it is natural to consider Kronecker products in this context.}

As has been advertised in the introduction, we can prove the following result

\begin{theorem}[Main result]\label{thm:hirip}
  Given $\A \in \KK^{M\times N}$ having $s$-sparse RIP with constant $δ^\A_s$ and $\B \in \KK^{m\times n}$ with $σ$-sparse RIP with constant $δ^\B_\sigma$, then 
  \begin{equation}
    \A \otimes \B : \KK^{Nn} \to \KK^{Mm}
  \end{equation}
  has $(s,\sigma)$-sparse HiRIP with constant 
  \begin{equation}
     δ_{(s,σ)} \leq δ^\A_s + δ^\B_σ + δ^\A_sδ^\B_σ.
   \end{equation} 
\end{theorem}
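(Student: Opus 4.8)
The plan is to identify an $(s,\sigma)$-sparse vector with a sparse \emph{matrix} and to reduce the action of the Kronecker product to a two-sided matrix multiplication, so that the two RIP hypotheses can be applied one factor at a time. Concretely, I would reshape $\x \in \KK^{Nn}$ into the matrix $\vec X \in \KK^{n\times N}$ whose $i$-th column is the $i$-th length-$n$ block of $\x$. Under column-stacking vectorisation one has the identity $(\A \otimes \B)\x = \vecmap(\B \vec X \A^T)$, and since vectorisation is an isometry this turns the two quantities in \eqref{eq:HiRIP} into $\|\x\|^2 = \fnorm{\vec X}^2$ and $\|(\A\otimes\B)\x\|^2 = \fnorm{\B \vec X \A^T}^2$. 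The whole statement thus becomes a two-sided bound on $\fnorm{\B \vec X \A^T}^2$ in terms of $\fnorm{\vec X}^2$ for the relevant class of matrices.

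The key structural observation is how $(s,\sigma)$-sparsity manifests for $\vec X$: at most $s$ columns are nonzero and each nonzero column is $\sigma$-sparse. Consequently every column of $\vec X$ is $\sigma$-sparse and, because the column support has size at most $s$, every \emph{row} of $\vec X$ is $s$-sparse. I would exploit this by peeling off the two factors in sequence. Applying the $\sigma$-RIP of $\B$ to each ($\sigma$-sparse) column of $\vec X$ and summing the resulting inequalities over all columns yields $(1-\delta^\B_\sigma)\fnorm{\vec X}^2 \leq \fnorm{\B \vec X}^2 \leq (1+\delta^\B_\sigma)\fnorm{\vec X}^2$.

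Next I would handle the right factor $\A^T$. The decisive point is that left-multiplication by $\B$ does not enlarge the set of nonzero columns, so $\B \vec X$ has the same column support as $\vec X$ and hence its rows are again $s$-sparse. Writing $\fnorm{\B\vec X\A^T}^2 = \sum_j \|\A \vec v_j\|^2$, where $\vec v_j$ is the $j$-th row of $\B\vec X$ transposed into $\KK^N$ (using $\|\vec v^T \A^T\| = \|\A \vec v\|$), and applying the $s$-RIP of $\A$ to each $s$-sparse $\vec v_j$ gives $(1-\delta^\A_s)\fnorm{\B\vec X}^2 \leq \fnorm{\B\vec X\A^T}^2 \leq (1+\delta^\A_s)\fnorm{\B\vec X}^2$. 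Chaining the two double inequalities multiplies the factors $(1\pm\delta^\A_s)$ and $(1\pm\delta^\B_\sigma)$; the upper side produces the deviation $\delta^\A_s + \delta^\B_\sigma + \delta^\A_s\delta^\B_\sigma$ and the lower side the smaller $\delta^\A_s + \delta^\B_\sigma - \delta^\A_s\delta^\B_\sigma$, so taking the larger of the two yields the claimed bound on $\delta_{(s,\sigma)}$.

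I expect the main point of care to be not a hard estimate but the bookkeeping around the sparsity structure at the intermediate stage: one must verify that the column support is genuinely preserved under $\vec X \mapsto \B\vec X$, so that the rows of $\B\vec X$ remain $s$-sparse. This is precisely what lets the two RIP conditions \emph{chain} instead of forcing a joint $s\sigma$-sparse RIP, and it is the reason the HiRIP behaves so much better than the ordinary RIP under Kronecker products. The other place needing attention is fixing the vectorisation convention consistently with the block layout of $\x$, so that $\A$ really acts across blocks and $\B$ within blocks; everything else reduces to summing one-dimensional RIP inequalities.
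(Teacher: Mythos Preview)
Your proposal is correct and is essentially the same proof as the paper's, just phrased in matrix/Frobenius-norm language rather than in tensor language: your reshaping $\x\mapsto\vec X$ and the identity $(\A\otimes\B)\x=\vecmap(\B\vec X\A^T)$ are exactly the paper's factorisation $(\A\otimes\B)=(\A\otimes\Id_n)(\Id_N\otimes\B)$ together with its flip operator $F_{N,n}$ (transposition of the reshaped matrix), and your ``rows of $\B\vec X$ are $s$-sparse because the column support is preserved'' is precisely the paper's observation that $(\Id_N\otimes\B)\x$ still has only $s$ nonzero blocks, so that after flipping each block is $s$-sparse. The two arguments then chain the $\sigma$-RIP of $\B$ and the $s$-RIP of $\A$ in the same way to obtain $(1\pm\delta^\A_s)(1\pm\delta^\B_\sigma)$.
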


%The proof of this result is given in section \ref{sec:proofs}. Let us instead discuss a few consequences of this theorem.

  Before presenting the proof of Theorem \ref{thm:hirip}, we need to introduce some 
  notation. First, we let $\vecmap: \KK^{N\times n} \to \KK^{Nn}$ denote the canonical isomorphism of column-wise vectorisation. In other words, $\vecmap$ is defined by linear extension of the requirement $\vecmap(\E_{i,j}) = \vec{e}_i \otimes \vec{e}_j$, where $\E_{i,j} = \vec{e}_i \vec{e}_j^T \in \KK^{N\times n}$ denotes the matrix with only one non-vanishing unit entry in the $i$-th row and $j$-th column. The Kronecker product is always understood as 
\begin{equation}
  \A \otimes \B = \begin{pmatrix} 
    a_{1,1} \B& \ldots & a_{1,N}\B  \\
    \vdots & \ddots & \vdots \\
    a_{m,1} \B & \ldots & a_{m,N}\B
  \end{pmatrix}.
\end{equation}
This convention justifies the term \emph{column-wise vectorisation}. 

It will be convenient to also implicitly make use of \emph{row-wise vectorisation}, which can be defined as $\vec X \mapsto \vecmap(\vec X^T)$.  Passing from one vector representation to the other amounts to applying the \emph{flip operator} $F_{N,n}: \KK^{Nn} \to \KK^{Nn}$, that linearly extends the mapping $\vec{e}_i \otimes \vec{e}_j \mapsto \vec{e}_j \otimes \vec{e}_i$.
%\footnote{Note that the swapping in the exponent of the vector spaces does not have a meaning but just reminds us of the underlying identification $\KK^N \otimes \KK^n \cong \KK^{Nn} \cong \KK^n \otimes \KK^N$.}

The action induced by switching between the column-wise to row-wise vectorisation in the space of operators acting on the vector space is the swap of the tensor product components. To be precise:
\begin{lemma}\label{lem:flipping} For $\A \in \KK^{M\times N}$ and $\B \in \KK^{m\times n}$ and $\vec X \in \KK^{N\times n}$ it holds that 
\begin{equation*}
  (\A \otimes \B) \vecmap(\vec X) = F_{m,M} (\B \otimes \A) F_{N,n} \vecmap(\vec X)
\end{equation*}
and
\begin{equation*}\label{eq:fliptransp}
  F_{N,n} \vecmap(\vec X) = \vecmap(\vec X^T). 
\end{equation*}
\end{lemma}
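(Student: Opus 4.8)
The plan is to verify both displayed identities directly on the rank-one basis matrices $\E_{i,j} = \vec e_i \vec e_j^T$ and then extend by linearity. This reduction is legitimate because $\vecmap$, the two Kronecker products, and the flip operators are all linear, and $\{\E_{i,j}\}_{i\in[N],\,j\in[n]}$ is a basis of $\KK^{N\times n}$. The single nontrivial algebraic ingredient I would invoke is the mixed-product property of the Kronecker product, $(\A \otimes \B)(\vec u \otimes \vec v) = (\A \vec u)\otimes(\B \vec v)$, together with the defining action of the flip on pure tensors.

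First I would establish the transpose identity $F_{N,n}\vecmap(\vec X) = \vecmap(\vec X^T)$, since it is both the simpler statement and the one feeding into the other. On a basis element we have $\vecmap(\E_{i,j}) = \vec e_i \otimes \vec e_j$, so by definition of the flip $F_{N,n}\vecmap(\E_{i,j}) = \vec e_j \otimes \vec e_i$. On the other hand $\E_{i,j}^T = \vec e_j \vec e_i^T = \E_{j,i}$ as a matrix in $\KK^{n\times N}$, whence $\vecmap(\E_{i,j}^T) = \vec e_j \otimes \vec e_i$. The two expressions coincide, and linearity finishes this part.

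For the first identity I would again test on $\E_{i,j}$, writing $\vec a_i = \A\vec e_i$ and $\vec b_j = \B\vec e_j$ for the columns of $\A$ and $\B$. The left-hand side is $(\A\otimes\B)(\vec e_i\otimes\vec e_j) = (\A\vec e_i)\otimes(\B\vec e_j) = \vec a_i \otimes \vec b_j$ by the mixed-product property. On the right-hand side, $F_{N,n}$ sends $\vec e_i\otimes\vec e_j$ to $\vec e_j\otimes\vec e_i$, then $\B\otimes\A$ produces $\vec b_j\otimes\vec a_i$, and finally $F_{m,M}$ swaps the factors back to $\vec a_i\otimes\vec b_j$. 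The two sides agree, and linearity closes the argument.

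This lemma is essentially a bookkeeping statement, so I do not expect a genuine obstacle; the one place where care is needed is tracking the two \emph{distinct} flip operators and their domains. Concretely, $F_{N,n}$ acts on the domain $\KK^N\otimes\KK^n$ of $\vecmap(\vec X)$, whereas $F_{m,M}$ acts on the codomain $\KK^m\otimes\KK^M$ produced by $\B\otimes\A$; getting the dimension labels and the order of the tensor factors consistent is the only thing that could derail a hasty calculation. The sole subtle point is that the final application of $F_{m,M}$ must swap the \emph{arbitrary} pure tensor $\vec b_j\otimes\vec a_i$, not merely a basis product, which is justified by expanding in the standard basis and using bilinearity of $\otimes$.
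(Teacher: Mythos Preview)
Your proposal is correct. The paper does not actually supply a proof of this lemma; it merely states it and then moves directly to the proof of Theorem~\ref{thm:hirip}, so your basis-and-linearity verification is exactly the kind of routine check the authors left implicit.
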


We now have all the tools we need to prove Theorem \ref{thm:hirip}.

%With the help of  \eqref{eq:fliptransp} we can observe that if $\vecmap(X)$ is an $s$-block-sparse vector $F_{N,n}\vecmap(X)$ has jointly-$s$-sparse blocks.  
%
%As a preparation for the proof of the main theorem, we will use this observation to prove the following lemma, which in some sense is one half of Theorem \ref{thm:hirip}.
%\begin{lemma} Let $\x \in \KK^{Nn}$ be a $s$-block-sparse vector and $\A \in \KK^{m,N}$ a matrix. If $\A$ has the $s$-sparse RIP property then $\A \otimes \Id_n$ has the $(s,n)$-sparse HiRIP.
%\end{lemma}
%\begin{proof}
% First, we argue that if $A$ has the $s$-sparse RIP property then $\Id_n\otimes A$ has the $(s,n)$-sparse RIP. To this end, let $x \in \KK^{nN}$ be a vector with $n$ blocks $x_i \in \KK^N$ which are $s$-sparse each. Then using the $s$-sparse RIP of $A$ we find
% \begin{equation}
%   \norm{(\Id_n \otimes A) x}^2 = \sum_{i \in [n]} \norm{Ax_i}^2 
%   \begin{cases}
%     \leq (1+\delta_s) \norm{x}^2 \\
%     \geq (1- \delta_s) \norm{x}^2.
%   \end{cases}
% \end{equation}
% Now, using Lemma~\ref{lem:flipping} and the fact that $F_{m,M}$ is unitary tranformation that leaves the norm invariant yields the claim.
%\end{proof}
%
%
%
%For this lemma to hold it is not required that the blocks of the block-sparse $x$ are dense. This allows us to apply in the hierarchical sparse setting. 

\begin{proof}[Proof of Theorem~\ref{thm:hirip}]
Let $\x \in \KK^{Nn}$ be hierarchically $(s,\sigma)$-sparse. With the help of Lemma~\ref{lem:flipping} we find
\begin{align*}
  \norm{(\A \otimes \B)\x}^2  &= \norm{(\A \otimes \Id_n)(\Id_N \otimes \B) \x}^2 \\
  &=\norm{F_{m,M} (\Id_n \otimes \A) F_{N,n} (\Id_N \otimes \B) \x}^2 \\
  &= \norm{(\Id_n \otimes \A) F_{N,n} (\Id_N \otimes \B) \x}^2,
\end{align*}
where the last line follows from the fact that $F_{m,M}$ is unitary.  The vector $(\Id_N \otimes \B)\x$ has only non-vanishing entries in $s$ of its $N$ blocks. Therefore, the flipped vector $\vec{h} \coloneqq F_{N,n}(\Id_N \otimes \B)\x$ consists of blocks $\vec{h}_i \in \KK^N$ with $i \in [n]$ that are at most $s$-sparse each. This allows us to apply the $s$-sparse RIP property of $\A$ for each of the blocks
\begin{align*}
  &\norm{(\Id_n \otimes \A) \vec{h}}^2 = \sum_{i\in [n]} \norm{\A \vec{h}_i}^2  \leq (1+ \delta_s) \norm{\vec{h}}^2.
\end{align*}
Making use of the unitarity of the flip once again, the $\ell_2$-norm of $\vec{h}$ is identical to 
\begin{align} \label{eq:crucialStep}
  \norm{\vec{h}}^2 &= \norm{(\Id_N \otimes \B)\x}  = \sum_{i \in [N]} \norm{\B\x_i},
\end{align}
where $\x_i \in \KK^n$ $i \in [N]$ are the $\sigma$-sparse blocks of $\x$. Every term of the sum is bounded by the $\sigma$-sparse RIP of $\B$ yielding 
\begin{equation*}
  \norm{\vec{h}}^2 \leq (1+\delta_\sigma)\norm{\x}^2. 
\end{equation*}
In summary, we have established
\begin{equation*}
  \norm{(\A\otimes \B)\x}^2 \leq (1+\delta_s) (1+ \delta_\sigma) \norm{\x}.
\end{equation*}
The lower RIP bound can be derived in the same way, completing the proof. 
\end{proof}

%\begin{figure}
%\centering
%\includegraphics[width=5cm]{Flip.eps}
%\caption{The flip operator turns a vector with $s$ non-zero blocks into a vector with $s$-sparse blocks.}
%\end{figure}

%\paragraph{Recovery of $(s,\sigma)$-sparse vectors}

%\AF{Main results if HiRIP paper, Algo, recovery statement}

The main consequence of Theorem~\ref{thm:hirip} is that it allows to construct a new class of measurement matrices for which the HiHTP algorithm is guaranteed succeed. More precisely, we get the following corollary. 

\begin{corollary}\label{cor:recovery}
  Let $\A \in \KK^{M\times N}$ and $\B \in \KK^{m\times n}$, and suppose that the following RIP-conditions hold
  \begin{align*}
    \delta^{\A}_{3s}, \delta^{\B}_{2\sigma} \leq \sqrt{\frac{\sqrt{3}+1}{\sqrt{3}}} -1. 
  \end{align*}
  Then, for $\x \in \KK^{nN}$, $\vec e \in \KK^{Mm}$ and $\Omega \subseteq [N] \times [n]$ and $(s,\sigma)$-sparse support set, the sequence $\x^k$ defined by the HiHTP Algorithm~\ref{alg:HiHTP} with $y = (\A \otimes \B) \x\vert_\Omega +\ev$ satisfies, for any $k\geq 0$
  \begin{align*}
    \norm{ \x^k - \x \vert_\Omega} \leq \rho^k \norm{\x^0- \x \vert_\Omega} + \tau \norm{\ev},
\end{align*}   
where
\begin{align*}
  \rho = \left(\frac{2(\delta^{\A}_{3s}+ \delta^{\B}_{2\sigma} + \delta^{A}_{3s} \delta^{\B}_{2\sigma})} {1 -(\delta^{\A}_{3s}+ \delta^{\B}_{2\sigma} + \delta^{A}_{3s} \delta^{\B}_{2\sigma})^2 }\right) <1.
\end{align*}
\end{corollary}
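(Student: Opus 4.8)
The plan is to obtain the statement by combining two ingredients: the generic convergence guarantee established for the HiHTP algorithm in Ref.~\cite{HiRIP}, and the HiRIP estimate of Theorem~\ref{thm:hirip}. The HiHTP theorem asserts an error recursion of exactly the stated shape $\norm{\x^k - \x\vert_\Omega} \le \rho^k \norm{\x^0 - \x\vert_\Omega} + \tau\norm{\ev}$ whenever the sensing matrix has a sufficiently small HiRIP constant at a level raised above the target sparsity $(s,\sigma)$. As in the analysis of plain hard thresholding pursuit in Ref.~\cite{Foucart:2011}, the level is dictated by the supports meeting within one iteration: the true support $\Omega$, the support of the current iterate, and the freshly thresholded set $\Omega^{k+1}$ are each $(s,\sigma)$-sparse, and the HiHTP analysis controls their interaction through the $(3s,2\sigma)$-HiRIP constant of $\A\otimes\B$ — whence the hypotheses and the rate are phrased via $\delta^{\A}_{3s}$ and $\delta^{\B}_{2\sigma}$.

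First I would invoke Theorem~\ref{thm:hirip} at the raised levels $3s$ and $2\sigma$. Since $\A$ has $3s$-RIP and $\B$ has $2\sigma$-RIP by assumption, the theorem gives
\begin{equation*}
  \delta_{(3s,2\sigma)}^{\A\otimes\B} \le \delta^{\A}_{3s} + \delta^{\B}_{2\sigma} + \delta^{\A}_{3s}\delta^{\B}_{2\sigma} = (1+\delta^{\A}_{3s})(1+\delta^{\B}_{2\sigma}) - 1.
\end{equation*}
Writing $t \coloneqq \sqrt{(\sqrt 3 + 1)/\sqrt 3} - 1$ for the threshold appearing in the hypotheses and using $(1+t)^2 = (\sqrt 3 + 1)/\sqrt 3 = 1 + 1/\sqrt 3$, the assumptions $\delta^{\A}_{3s}, \delta^{\B}_{2\sigma} \le t$ force
\begin{equation*}
  \delta_{(3s,2\sigma)}^{\A\otimes\B} \le (1+t)^2 - 1 = \tfrac{1}{\sqrt 3}.
\end{equation*}
Thus the RIP conditions on the factors translate exactly into the Foucart-type threshold $\delta_{(3s,2\sigma)}^{\A\otimes\B} \le 1/\sqrt 3$ under which the HiHTP guarantee is operative.

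It then remains to record the contraction factor. The HiHTP theorem expresses $\rho$ as a function of the HiRIP constant $\delta = \delta_{(3s,2\sigma)}^{\A\otimes\B}$ that increases monotonically in $\delta$ and stays strictly below one exactly for $\delta < 1/\sqrt 3$. Using this monotonicity I would replace the genuine constant $\delta$ by the larger product bound $\delta^{\A}_{3s} + \delta^{\B}_{2\sigma} + \delta^{\A}_{3s}\delta^{\B}_{2\sigma}$ supplied by Theorem~\ref{thm:hirip}; the substitution matches the expression for $\rho$ displayed in the statement and leaves it strictly below one, since the bound does not exceed $1/\sqrt 3$. The stabilising constant $\tau$ is carried over (and, by the same monotonicity, may likewise be expressed through the product bound) from the cited theorem.

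I expect the only real difficulty to be bookkeeping rather than conceptual. One must check against Ref.~\cite{HiRIP} that the sparsity levels entering the HiHTP hypothesis are indeed $(3s,2\sigma)$ — in particular that the within-block level needed to control the iteration stays at $2\sigma$ even when the three support sets share a common active block — and that the contraction factor is genuinely monotone in $\delta$, so that upper bounding $\delta$ by the product of Theorem~\ref{thm:hirip} is legitimate and the resulting estimate remains valid uniformly in $k$.
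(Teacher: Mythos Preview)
Your proposal is correct and follows precisely the paper's own argument: apply Theorem~\ref{thm:hirip} at levels $(3s,2\sigma)$ to obtain $\delta_{(3s,2\sigma)}^{\A\otimes\B}\le \delta^{\A}_{3s}+\delta^{\B}_{2\sigma}+\delta^{\A}_{3s}\delta^{\B}_{2\sigma}<1/\sqrt{3}$, then invoke the HiHTP convergence theorem from Ref.~\cite{HiRIP}. The paper states this in two lines without your additional remarks on the origin of the $(3s,2\sigma)$ levels or the monotonicity of $\rho$, but the substance is identical.
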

\begin{proof}
  We simply need to note that Theorem \ref{thm:hirip} implies that $$\delta_{3s,2\sigma}^{\A \otimes \B} \leq (\delta^{\A}_{3s}+ \delta^{\B}_{2\sigma} + \delta^{A}_{3s} \delta^{\B}_{2\sigma} )< \frac{1}{\sqrt{3}}.$$
  The rest follows from Theorem~1 of \cite{HiRIP}.
\end{proof}

In Ref.~\cite{HiRIP} one example has been given of a class of matrices guaranteed to possess the HiRIP with high probability. Concretely, it was shown that a random dense Gaussian matrix $\vec G\in \KK^{m\times Nn}$ has the $(s,\sigma)$-HiRIP with high probability under the assumption
\begin{align*}
  m \gtrsim s\sigma \log(N) + \sigma \log(Nn).
\end{align*}
These are slightly less measurements than the $s\sigma \log(Nn)$ measurements needed to secure the (unstructured) $s\sigma$-RIP with high probability \cite{RIP}. 
%}

We can now describe a new class: Taking any pair of random matrices $\A \in \KK^{M\times N}$ and $\B \in \KK^{m\times n}$ both guaranteed to possess the $s$- and $\sigma$-RIP with high probability, $\A \otimes \B$ will have the $(s,\sigma)$-HiRIP. As an example, we can use random Gaussian matrices with $M \gtrsim s \log(N)$ and $m \gtrsim \sigma \log(n)$, resulting in a measurement matrices $\vec A \otimes \vec B \in \KK^{\mu\times nN}$ with $$\mu \gtrsim s\sigma \log(N)\log(n).$$
Hence, a measurement scheme using Kronecker matrices will need slightly more measurements than the fully Gaussian matrices to have the HiRIP. 

This price could, however, sometimes be worth paying. First, we get a vast reduction in space needed to store the matrix ($(MN + mn)$ instead of $MN\cdot mn$). Also, as has been discussed in the introduction, there are applications where the Kronecker structure of a measurement process is inherent. 
Moreover, the results and notions of this section can readily be generalized to hierarchical sparsity with more than two levels. 

\begin{definition}[Multilevel hierarchical sparsity, HiRIP]
  Let $L \geq 3$, $n_1, \ldots, n_L$ and $s_1, \ldots, s_L$ be natural numbers.
 \begin{enumerate}
 \item A vector $\x \in \KK^{n_1  \cdots  n_L}$ is called $(s_1, \ldots, s_L)$-(hierarchically) sparse if it consists of $n_1$ blocks $\x_i \in \KK^{n_2  \cdots n_L}$ such that only $s_1$ blocks are non-zero, and each $\x_i$ is $(s_2, \dots, s_L)$-sparse.
 \item For $\A^{m \times n_1 \cdots n_L}$, we define $\delta_{s_1, \dots, s_L}$ as the smallest $\delta \geq 0$ for which Inequality~\eqref{eq:HiRIP} holds for all $(s_1, \dots, s_L)$-sparse vectors $\x$.
 \end{enumerate}
\end{definition}

The following result is a generalization of  Theorem~\ref{thm:hirip}.

\begin{theorem} \label{thm:HiRIPMoreLevels}
  Let $\vec A \in \mathbb{K}^{m\times n_1}$ be a matrix with  RIP-constant $\delta_{s_1}^A$ and $\vec B \in \KK^{M\times N}$ one with HiRIP constant $\delta_{s_2, \dots, s_L}^{\vec B}$. Then the hierarchical RIP-constant $\delta_{s_1, \dots, s_n}$ of $\vec A \otimes \vec B \in \KK^{mM\times nN}$ satisfies
  \begin{align*}
    \delta_{s_1, \dots , s_L}\leq  \delta_{s_1}^{\vec A} + \delta_{s_2, \dots, s_L}^{\vec B} + \delta_{s_1}^{\vec A}  \cdot \delta_{s_2, \dots, s_L}^{\vec B}.
  \end{align*}
  In particular, through induction, we obtain for matrices $\vec A_i \in \KK^{m_i\times n_i}$, $i=1, \dots, L$, with $s_i$-th RIP constants $\delta_{s_i}^{\vec A_i}$:
  \begin{align*}
    \delta_{s_1, \dots, s_L}(\vec A_1 \otimes \dots \otimes \vec A_L) \leq \prod_{i=1}^L (1+ \delta_{s_i}^{\vec A_i}) -1.
  \end{align*}
\end{theorem}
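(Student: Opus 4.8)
The plan is to run the argument in the proof of Theorem~\ref{thm:hirip} essentially verbatim, the single conceptual change being that the inner Kronecker factor now carries a HiRIP rather than a plain RIP. Concretely, I would take $\x \in \KK^{n_1 N}$ (with $N = n_2\cdots n_L$) to be $(s_1,\dots,s_L)$-sparse, so that it splits into $n_1$ blocks $\x_i \in \KK^{N}$ of which at most $s_1$ are non-zero and each of which is $(s_2,\dots,s_L)$-sparse. Writing $(\vec A\otimes\vec B)\x = (\vec A\otimes\Id_N)(\Id_{n_1}\otimes\vec B)\x$ and applying Lemma~\ref{lem:flipping} together with the unitarity of the flip operator, I would reduce $\norm{(\vec A\otimes\vec B)\x}^2$ to $\norm{(\Id_M\otimes\vec A)\vec h}^2$, where $\vec h \coloneqq F(\Id_{n_1}\otimes\vec B)\x$.

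The proof then rests on two estimates. First, since applying $\vec B$ block-wise does not change which of the $n_1$ blocks vanish, the flipped vector $\vec h$ decomposes into $M$ blocks $\vec h_i \in \KK^{n_1}$ that are each at most $s_1$-sparse; the $s_1$-sparse RIP of $\vec A$ applied block-wise then gives $\norm{(\Id_M\otimes\vec A)\vec h}^2 = \sum_{i\in[M]}\norm{\vec A\vec h_i}^2 \leq (1+\delta_{s_1}^{\vec A})\norm{\vec h}^2$. Second---and this is the only genuinely new step---I would use unitarity of the flip once more to write $\norm{\vec h}^2 = \sum_{i\in[n_1]}\norm{\vec B\x_i}^2$, and now invoke the \emph{HiRIP} of $\vec B$ on each $\x_i$, which is legitimate precisely because each $\x_i$ is $(s_2,\dots,s_L)$-sparse, to obtain $\norm{\vec h}^2 \leq (1+\delta^{\vec B}_{s_2,\dots,s_L})\norm{\x}^2$. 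Chaining the two bounds yields the upper inequality with constant $(1+\delta_{s_1}^{\vec A})(1+\delta^{\vec B}_{s_2,\dots,s_L})$; the lower bound follows identically, and subtracting $1$ gives the claimed expression $\delta_{s_1}^{\vec A}+\delta^{\vec B}_{s_2,\dots,s_L}+\delta_{s_1}^{\vec A}\delta^{\vec B}_{s_2,\dots,s_L}$.

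For the product formula I would argue by induction on $L$. The base case $L=2$ is exactly Theorem~\ref{thm:hirip}, since a single-level HiRIP constant is just an ordinary RIP constant. For the inductive step I would group $\vec A_1\otimes\dots\otimes\vec A_L = \vec A_1 \otimes (\vec A_2\otimes\dots\otimes\vec A_L)$ and apply the first part of the present theorem with $\vec A = \vec A_1$ and $\vec B = \vec A_2\otimes\dots\otimes\vec A_L$, which gives $1+\delta_{s_1,\dots,s_L} \leq (1+\delta_{s_1}^{\vec A_1})(1+\delta^{\vec B}_{s_2,\dots,s_L})$. The induction hypothesis bounds the second factor by $\prod_{i=2}^L(1+\delta_{s_i}^{\vec A_i})$, and the telescoping into $\prod_{i=1}^L(1+\delta_{s_i}^{\vec A_i})$ is immediate.

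I do not expect a serious obstacle here: the analytic content is identical to the two-level case, so the work is almost entirely bookkeeping of dimensions and flip operators. The one point that requires care is the justification in the second estimate that each inner block $\x_i$ really is $(s_2,\dots,s_L)$-sparse, so that the HiRIP of $\vec B$---and not merely a product-sparsity RIP of its coarsening---may be applied. This is exactly what the recursive definition of multilevel sparsity guarantees, and it is the hinge that makes the induction close into the clean product bound rather than degrading the sparsity structure at each level.
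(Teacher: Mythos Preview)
Your proposal is correct and mirrors the paper's own proof essentially verbatim: the paper simply says the argument of Theorem~\ref{thm:hirip} goes through unchanged up to the step \eqref{eq:crucialStep}, where one invokes the $(s_2,\dots,s_L)$-HiRIP of $\vec B$ on the blocks $\x_i$ in place of the $\sigma$-RIP, and the product bound then follows by induction. You have also made explicit the induction step that the paper only states, and correctly flagged the one substantive point---that each block $\x_i$ inherits $(s_2,\dots,s_L)$-sparsity from the recursive definition.
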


The techniques of the proof of Theorem \ref{thm:hirip} can readily be adapted to prove also Theorem \ref{thm:HiRIPMoreLevels}. 

\begin{proof}[Proof of Theorem~\ref{thm:HiRIPMoreLevels}]
The proofs reads exactly as the proof of Theorem~\ref{thm:hirip} up to equation \eqref{eq:crucialStep}, with an adapted version of the flipping operator. Here, we use that the blocks $\x_i$ are not $\sigma$-sparse, but $(s_2, \dots, s_L)$-sparse, and apply the corresponding RIP of $\B$. The statement now follows in exactly the same manner as above. 
\end{proof}

Analogously, to Corollary~\ref{cor:recovery} we can derive a recovery guarantee for a generalisation of HiHTP to more general hierarchically sparse vectors including multiple layers using the results of Ref.~\cite{HiRIP}.

\section{Efficiently checkable HiRIP} \label{sec:check}
Theorem~\ref{thm:HiRIPMoreLevels} has a consequence that may be surprising. Given a matrix $\vec A \in \KK^{M\times N}$ and some $\delta > 0$, to certify whether the standard RIP constant $\delta_s$ is smaller than $\delta$ is in general an NP-hard problem \cite{BandeiraEtAl2013, TillmannPfetsch2014}. %This is in fact a relevant obstactle to designing reliable measurement schemes using compressed sensing in practical applications. 
To this date, no deterministic constructions of measurement matrices are known that achieve an optimal scaling in sampling complexity. %To design a system with optimal sampling complexity one has to rely on probabilistic measurement matrix ensembles. 
For a variety of  ensembles of matrices there exist guarantees that with high probability a random instance with optimal scaling complexity fulfils the RIP. However, checking whether one was lucky or not is not feasible already for intermediate sized systems.  

The brute-force approach for certifying the normal $S$-sparse RIP of a matrix $\vec A \in \KK^{M\times N}$ is to calculate $\|{\vec A_\Omega\ad \vec A_\Omega - \Id_S}\|$ for all $\binom{N}{S}$ $S$-sized supports  $\Omega$ and taking the maximum. The computational effort of this approach scales at least as $(N/S)^S S^3$, which grows exponentially in $N$ for fixed ratios $(N/S)$. 
Note that the number of hierarchically sparse supports also scales exponentially in the overall system size. Hence, a brute force calculation to certify the HiRIP of an arbitrary matrix is also exponentially expensive.

If we, however, let $\vec A = \vec B^{\otimes L}$, where $\vec B \in \KK^{m\times n}$. We can certify the $\vec s=(s,\dots,s)=:(s^{,L})$-HiRIP of $\vec A$ by simply brute-force checking the $s$-RIP of the matrix $\vec B$, and subsequently invoking Theorem~\ref{thm:HiRIPMoreLevels}, which in this case reads: 
\begin{corollary}\label{cor:levelHiRIP}
  Given a matrix $\vec B$ with $s$-sparse RIP with constant $\delta_s$, the matrix $\vec B^{\otimes L}$ has $(s^{,L})$-HiRIP with 
  \begin{equation}
    \delta_{(s^{,L})} \leq (1+ \delta_s)^l - 1.
  \end{equation}
\end{corollary}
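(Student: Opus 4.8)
The plan is to apply Theorem~\ref{thm:HiRIPMoreLevels} directly, specializing every Kronecker factor to one and the same matrix $\vec B$ and every sparsity level to $s$. First I would invoke the product form of the multilevel bound established at the end of Theorem~\ref{thm:HiRIPMoreLevels}, namely $\delta_{s_1, \dots, s_L}(\vec A_1 \otimes \dots \otimes \vec A_L) \leq \prod_{i=1}^L (1 + \delta_{s_i}^{\vec A_i}) - 1$, which was itself obtained by induction over the number of hierarchy levels. Setting $\vec A_i = \vec B$ and $s_i = s$ for every $i \in [L]$ identifies the Kronecker product $\vec A_1 \otimes \dots \otimes \vec A_L$ with the power $\vec B^{\otimes L}$, and the target sparsity pattern $(s_1, \dots, s_L)$ with $(s^{,L})$.

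Since each factor then contributes the identical ordinary RIP constant $\delta_{s_i}^{\vec A_i} = \delta_s$, the product collapses and yields
\begin{equation*}
  \delta_{(s^{,L})}(\vec B^{\otimes L}) \leq \prod_{i=1}^L (1 + \delta_s) - 1 = (1 + \delta_s)^L - 1,
\end{equation*}
which is exactly the claimed bound.

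The remaining points warrant bookkeeping rather than genuine effort. I would verify that the hierarchical sparsity model $(s^{,L})$ attached to $\vec B^{\otimes L}$ really coincides with the recursive model used in Theorem~\ref{thm:HiRIPMoreLevels}: at the top level one retains $s$ of the $n$ blocks, each surviving block is again $(s^{,L-1})$-sparse, and the recursion terminates at the base level with plain $s$-sparsity, to which the standard RIP of $\vec B$ applies. Because every factor of the Kronecker power is the same $\vec B$ carrying the same constant $\delta_s$, no tracking of distinct per-level constants is needed and the induction underlying the product bound passes through verbatim. I do not anticipate any substantial obstacle here: the entire content of the corollary is that identical factors turn the general product estimate into the geometric-type bound $(1 + \delta_s)^L - 1$.
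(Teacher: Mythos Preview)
Your proposal is correct and mirrors exactly how the paper arrives at the corollary: it is presented there as an immediate specialization of the product bound in Theorem~\ref{thm:HiRIPMoreLevels} with all factors equal to $\vec B$ and all sparsities equal to $s$. No additional argument is needed beyond the substitution you describe.
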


The brute-force calculation of the $s$-sparse RIP of $\vec B$ thereby only takes an order of $(n/s)^s s^3 =(N/S)^\frac{S^{1/L}}{L} S^{3/L}$ computations, where $N=n^L$ and $S=s^L$ denotes the total system size and total sparsity, respectively. In the regime where $S \leq c L^L$ for some constant $c \in \RR$, we arrive at a polynomial scaling in the overall system size $N$. Hence, we can certify that $\vec A = \vec B^{\otimes L} \in \KK^{M\times N}$ has $(s^{,L})$-HiRIP in an efficient way.  More generally, our approach applies to all measurement matrices $\vec A$ with a known tensor decomposition of the form $\vec M = \sum_{i=1}^r \vec M_{i_1} \otimes \ldots \otimes \vec M_{i_L}$. Checking RIP individually for all compound matrices $\vec M_{i_j}$ induces an additional factor  of $rL$ in the computational costs of certifying HiRIP for $\vec M$ compared to a matrix of the form $\vec B^{\otimes L}$. %This is not in contrast to the hardness of HiRIP in general, since calculating a tensor-decomposition of an arbitrary matrix $\vec M$ is in general NP-hard. 

Let us verify that this efficient scheme is actually practical for real applications by evaluating the computational costs for reasonable parameter values. Let us assume $L=3$ levels with block
size $n=10^2$ and sparsity $s = 10$ on each level.This amounts to  $N = 10^6$ and $S = 10^3$. Hence, the brute-force approach for checking RIP or HiRIP for an arbitrary measurement matrix requires order of $10^{10^3}$ computations. In comparison, if we let  as many of today's fastest computing devices (with around 100 Peta FLOPS) as there are atoms in the universe run for the entire estimated age of the universe, we would be able perform around $10^{115}$ computations.
On the contrary, we can check $\delta_{10}$ for $\vec B$ with $n = 10^2$ with order of $10^{10}$ computations, which is practically feasible on current desktop hardware.

This scheme, however, has a catch in the form of a suboptimal sampling complexity $M$. To ensure that $δ_{(s^{,L})}$ is smaller than a constant $\tilde\delta$ using Corollary~\ref{cor:levelHiRIP} one has to have $δ_s \leq (\tilde{δ} +1)^{1/L}-1$. Typically, for a matrix $\vec B \in \KK^{m\times n}$ to have RIP with $δ_s \leq δ$ requires $m \geq m_0$ with $m_0$ scaling at least as $m_0 \sim 1 / \delta \sim L/\log(1+\tilde{\delta}) - 1/2 + \mathcal{O}(1/L)$ for large $L$. Therefore, in the regime of efficiently checkable HiRIP $S \leq c L^L$, we get an overall sampling complexity $M = m^L \gtrsim (Ls)^{L}  \gtrsim S^2$ scaling quadratically in $S$. %This indicates a quadratical scaling in $S$ of the number of samples.
This is reminiscent of the quadratic bottleneck that also plagues most deterministic constructions of RIP matrices.

%Typically, for a matrix $\A \in \KK^{mn}$ to have RIP with $δ_s \leq δ$ requires $m \geq m_0$ with $m_0$ scaling at least as $m_0 \sim 1 / \delta \sim L/\log(1+\tilde{\delta}) -1/2 + \mathcal{O}(1/L)$ for large $L$. Therfore, in the regime of efficiently checkable HiRIP $S \leq c L^L$, we get an additional contritbution to the overall sampling complexity $M = m^L \gtrsim L^L \gtrsim S$. This indicates a quadratical scaling in $S$ of the number of samples. 

%\IR{
% In fact, making the statement explicite for Gaussian matrices we have
% \begin{equation}
%   δ_s \leq δ \Leftarrow m \geq \text{const.} \frac{1}{δ} s \log \frac{n}{s}. 
% \end{equation}
% Therefore, to ensure $δ_s \leq c^{\frac{1}{l}} - 1$ with $c > 1$, we have
% \begin{equation}
%   m \gtrsim \e^{- \frac{\tilde{c}}{l}} s \log \frac{n}{s}.  
% \end{equation}
% with $\tilde c > 0$. 
%
% Since certifying RIP is NP hard in $m$ and $n$ we get a inherit a possibly dangerous dependence on $l$. 
%
% But assymptotically even $\exp(m)$ is still sublinear. So is this bad or good?
%}

% \ir{This implies for the recovery of such signals \ldots }

% \ir{RIP is efficiently checkable since \ldots} 

\section{Conclusions}
The recovery of hierarchically sparse vectors from linear measurements of Kronecker type naturally appears in 
a plethora of practical applications. The HiHTP algorithm, an efficient algorithm for achieving such a
recovery, is guaranteed to work under a HiRIP condition. In this work, we have shown that a Kronecker product $\vec A \otimes \vec B$ has the $(s,\sigma)$-HiRIP property
as soon as its components exhibit the RIP. The analogous result holds for Kronnecker products with multiple factors and multi-level hierarchically sparse vectors. This is in contrast to the standard $s$-RIP, where each component needs to have the $s$-RIP. As a further application of our result, we described measurement schemes in which it can be efficiently checked to have the HiRIP, in sharp contrast to the general computational hardness of deciding whether a 
RIP constant is smaller than a given constant. These schemes, however, exhibit a suboptimal sample complexity.
 On a higher
level, the present work contributes to the program of identifying ways of achieving recovery of structured
vectors with as little
randomness as possible.

\section*{Acknowledgment}
AF acknowledges support from the DFG  (Grant KU 1446/18-1),
IR and JE by the DFG (EI 519/9-1), the Templeton Foundation and the ERC (TAQ), 
and GW by DFG SPP 1914 COSIP and EU H2020 5GPP project ONE5G.

%%%%%%
%% To balance the columns at the last page of the paper use this
%% command:
%%
%\enlargethispage{-1.2cm} 
%%
%% If the balancing should occur in the middle of the references, use
%% the following trigger:
%%
%\IEEEtriggeratref{3}
%%
%% which triggers a \newpage (i.e., new column) just before the given
%% reference number. Note that you need to adapt this if you modify
%% the paper.  The "triggered" command can be changed if desired:
%%
%\IEEEtriggercmd{\enlargethispage{-20cm}}
%%
%%%%%%

%%%%%%
%% References:
%% We recommend the usage of BibTeX:
%%
%\bibliographystyle{IEEEtran}
%\bibliography{definitions,bibliofile}
%%
%% where we here have assume the existence of the files
%% definitions.bib and bibliofile.bib.
%% BibTeX documentation can be obtained at:
%% http://www.ctan.org/tex-archive/biblio/bibtex/contrib/doc/
%%%%%%

%% Or you use manual references (pay attention to consistency and the
%% formatting style!):

\end{document}